\numberwithin{equation}{section}
\def\bn{{\mathbb N}}
\def\r{\rho}
\def\itm#1{\item{$(#1)$}}
\newtheorem{theorem}{Theorem}[section]
\newtheorem{proposition}[theorem]{Proposition}
\newtheorem{definition}{Definition}[section]
\newtheorem{example}{Example}[section]
\theoremstyle{definition}
\newtheorem{remark}{Remark}[section]
\begin{document}

\title{On a correspondence between regular and non-regular operator monotone functions}

\author{
P. Gibilisco \footnote{Dipartimento SEFEMEQ, Facolt\`a di
Economia, Universit\`a di Roma ``Tor Vergata", Via Columbia 2, 00133
Rome, Italy.  Email:  gibilisco@volterra.uniroma2.it --
URL: http://www.economia.uniroma2.it/sefemeq/professori/gibilisco},
F.Hansen \footnote{Department of Economics, University of Copenhagen, Studiestraede 6, DK-1455 Copenhagen K, Denmark.  Email: Frank.Hansen@econ.ku.dk --  URL: http://www.econ.ku.dk/okofh/},
T. Isola\footnote{Dipartimento di Matematica,
Universit\`a di Roma ``Tor Vergata",
Via della Ricerca Scientifica, 00133 Rome, Italy.
Email: isola@mat.uniroma2.it --
URL: http://www.mat.uniroma2.it/$\sim$isola}
}

\maketitle

\begin{abstract}

We prove that there is a bijection between the families of regular and non-regular operator monotone functions. As an application we give a new proof of the operator monotonicity of a certain class of functions related to Wigner-Yanase-Dyson skew information.

\smallskip

\noindent 2000 {\sl Mathematics Subject Classification.} Primary 62B10, 94A17; Secondary 46L30, 46L60.

\noindent {\sl Key words and phrases.}  Operator monotone functions, matrix means, quantum Fisher information.

\end{abstract}

%\tableofcontents

\section{Introduction}

In  \cite{WignerYanase:1963} Wigner and Yanase  proposed to find a measure of our knowledge of a difficult-to-measure observable
with respect to a conserved quantity. They discussed a number of postulates that such a measure should satisfy and proposed, tentatively,
the so called {\em skew information} defined by
$$
I_{\rho}(A)=-\frac{1}{2}{\rm Tr}([\rho^{\frac{1}{2}},A]^2),
$$
where $\rho$ is a state (density matrix) and $A$ is an observable (self-adjoint matrix), see the discussion in \cite{Hansen:2006}.
The postulates Wigner and Yanase discussed were all considered essential for such a measure of information and included the requirement from thermodynamics
that knowledge decreases under the mixing of states; or put equivalently, that the proposed measure is a convex function in the state $ \rho. $
Wigner and Yanase were aware that other measures of quantum information could satisfy the same postulates, including the measure
$$
I_{\rho}^{\beta}(A)=-\frac{1}{2}{\rm Tr}([\rho^{\beta},A] \cdot [\rho^{1-\beta},A]) \qquad
$$
with parameter $ \beta $ $ (0<\beta<1) $ suggested by Dyson and today known as the Wigner-Yanase-Dyson skew information.
Even these measures of quantum information are only examples of a more general class of information measures, the so called metric adjusted skew
informations \cite{Hansen:2006}, that all enjoy the same general properties as discussed by Wigner and Yanase for the skew information.

The Wigner-Yanase-Dyson (WYD) measures of information are not only used in quantum information theory. A list of applications in other fields
includes: i) strong subadditivity of entropy \cite{LiebRuskai:1973,Lieb:1973}; ii) homogeneity of the state space of factors
of type ${\rm III}_1$  \cite{ConnesStormer:1978};  iii) measures for quantum entanglement \cite{Chen:2005,KlyachkoOztopShumovsky:2006}; iv) uncertainty relations (see \cite{AudenaertCaiHansen:2008,GibiliscoIsola:2008c} and references therein); v) hypothesis testing \cite{CMTMAB:2008}.

This is, in a certain sense, not surprising since the WYD-information is connected to special choices of quantum Fisher information
(see \cite{HasegawaPetz:1997, Hansen:2006}). Similarly, the classical Fisher information was born inside statistics but now plays an important role in a manifold of different mathematical fields, some very far from the original statistical arena (see, for example, \cite{Carlen:1991}).

The crucial ingredient in establishing the connection between the WYD-information and quantum Fisher information is to prove operator monotonicity of the function
\begin{equation} \label{beta}
f_{\beta}(x)= \beta (1-
\beta) \frac{(x-1)^2}{(x^{\beta}-1) (x^{1-\beta}-1)}\qquad \beta
\in (0,1),
\end{equation}
see \cite{HasegawaPetz:1997, Hansen:2006,Szabo:2007} for the existing proofs. We will show that this fact is a simple corollary of a more general result that represents the main goal of the present paper.

To explain the main result of the paper we have to recall that in the last century fundamental bijections have been established between a certain family
of operator monotone functions, the Kubo-Ando operator means and the various types of quantum Fisher information (see \cite{Lowner:1934, KuboAndo:1979/80, Petz:1996}).

 Each group of objects can be subdivided into two components according to what follows. Any quantum Fisher information can be seen as a Riemannian metric on the space of faithful states (density matrices). It is natural to ask in which cases one can (radially) extend this Riemannian metric on the complex projective space given by the pure states. It turns out that this is possible if and only if the associated operator monotone function is {\em regular}, namely if $f(0)>0$
 (see \cite{Hansen:2006,PetzSudar:1996}). In this case the radial limit is just a multiple of the Fubini-Study metric.

Completing a work started in \cite{GibiliscoImparatoIsola:2007} we prove in Section \ref{tilde} that the application $f \to \tilde f,$ where
 $$
\tilde{f}(x)=\frac{1}{2}\left[ (x+1)-(x-1)^2 \frac{f(0)}{f(x)}
\right]\qquad x>0,
$$
is a bijection between the regular and the non-regular operator monotone functions. The operator monotonicity of the functions in (\ref{beta})
 then easily follows from the main result.

\section{Operator monotone functions, matrix means and quantum Fisher informations} \label{prel}

Let $M_n:=M_n(\mathbb{C})$ (resp. $M_{n,sa}:=M_{n,sa}(\mathbb{C})$)
be the set of all $n \times n$ complex matrices (resp.  all $n
\times n$ self-adjoint matrices).  We shall denote general matrices
by $X,Y,\ldots$ while letters $A,B,\ldots$ will be used for
self-adjoint matrices, endowed with the Hilbert-Schmidt scalar
product $\langle A,B \rangle={\rm Tr}(A^*B)$.  The adjoint of a
matrix $X$ is denoted by $X^{\dag}$ while the adjoint of a
superoperator $T:(M_n,\langle \cdot,\cdot \rangle) \to (M_n ,\langle
\cdot,\cdot \rangle)$ is denoted by $T^*$. Let ${\cal D}_n$ be the
set of strictly positive elements of $M_n$ and ${\cal D}_n^1 \subset
{\cal D}_n$ be the set of strictly positive density matrices, namely
$ {\cal D}_n^1=\{\rho \in M_n \vert {\rm Tr} \rho=1, \, \rho>0 \} $.
If not otherwise specified, we shall from now on only consider faithful $(\rho>0)$ states.

A function $f:(0,+\infty)\to
\mathbb{R}$ is said to be {\it operator monotone (increasing)} if, for any
$n\in \bn$ and $A$, $B\in M_n$ such that $0<A\leq B$, the
inequality $f(A)\leq f(B)$ hold.  A positive operator monotone
function $ f $ is said to be {\it symmetric} if $f(x)=xf(x^{-1}),$ and {\it
normalized} if $f(1)=1$.

\begin{definition}
${\cal F}_{op}$ is the class of functions $f: (0,+\infty) \to (0,+\infty)$ such that

\itm{i} $f(1)=1$,

\itm{ii} $tf(t^{-1})=f(t)$,

\itm{iii} $f$ is operator monotone.
\end{definition}

\begin{example}
    Examples of elements of ${\cal F}_{op}$ are given by the following
list
\[\begin{array}{rcllrcl}
f_{\text{RLD}}(x)&=&\displaystyle\frac{2x}{x+1},&&
f_{\text{WY}}(x)&=&\displaystyle\left(\frac{1+\sqrt{x}}{2}\right)^2,\\[12pt]
f_{\text{SLD}}(x)&=&\displaystyle\frac{1+x}{2},&& f_{\beta}(x)&=&\displaystyle \beta (1-\beta) \frac{(x-1)^2}{(x^{\beta}-1) (x^{1-\beta}-1)}\qquad \beta
\in (0,1).
$$\end{array}\]

\end{example}

A very short account of Kubo-Ando's theory of matrix means \cite{KuboAndo:1979/80} may be summarized as follows:

\begin{definition}
 A {\sl mean} for pairs of positive matrices is a function
$m:{\cal D}_n \times {\cal D}_n \to {\cal D}_n$ such that

\begin{enumerate}[(i)]

\item $m(A,A)=A$,

\item $m(A,B)=m(B,A)$,

\item $A <B  \quad \Longrightarrow \quad A<m(A,B)<B$,

\item $A<A', \quad B<B' \quad \Longrightarrow \quad m(A,B)<m(A',B')$,

\item $m$ is continuous,

\item $Cm(A,B)C^* \leq m(CAC^*,CBC^*)$\quad for every $ C \in M_n$.

\end{enumerate}
\end{definition}

Property $(vi)$ is known as the transformer inequality. We denote by
$\displaystyle {\cal M}_{op}$ the set of matrix means. The
fundamental result, due to Kubo and Ando, is the following.

\begin{theorem}
There exists a bijection between ${\cal M}_{op}$ and ${\cal F}_{op}$ given by
the formula
$$
m_f(A,B)= A^{\frac{1}{2}}f(A^{-\frac{1}{2}} B
A^{-\frac{1}{2}})A^{\frac{1}{2}}.
$$
\end{theorem}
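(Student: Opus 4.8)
The plan is to verify the three assertions contained in the theorem: that $\Phi\colon f\mapsto m_f$ maps $\cf_{op}$ into $\cam_{op}$, that $\Phi$ is injective, and that $\Phi$ is surjective. Injectivity is free, since $m_f(I,xI)=f(x)I$ recovers $f$ from $m_f$. To see that $m_f\in\cam_{op}$ I would check the six axioms in turn. Axiom $(i)$ is precisely $f(1)=1$ and axiom $(v)$ is the continuity of the continuous functional calculus. Axiom $(ii)$ uses $tf(t^{-1})=f(t)$: writing $T=A^{-1/2}BA^{-1/2}$, $S=B^{-1/2}AB^{-1/2}$ and using the polar decomposition $B^{1/2}S^{1/2}=A^{1/2}U$, one gets $US^{-1}U^*=T$, and then (splitting $Sf(S^{-1})=S^{1/2}f(S^{-1})S^{1/2}$ and using $f(S)=Sf(S^{-1})$) a short manipulation gives $m_f(B,A)=A^{1/2}f(T)A^{1/2}=m_f(A,B)$. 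Axioms $(iii)$ and $(iv)$ follow from the fact that $f$, being a non-constant (else $(ii)$ fails) operator monotone function, is strictly increasing, combined with the behaviour of $m_f$ under conjugation. The one genuinely substantial axiom is the transformer inequality $(vi)$; here I would invoke the integral representation of operator monotone functions: every $f\in\cf_{op}$ can be written $f(x)=\int_{[0,\infty]}\frac{(1+t)\,x}{x+t}\,d\mu(t)$ for a probability measure $\mu$ on $[0,\infty]$, the integrand being understood as $1$ at $t=0$ and $x$ at $t=\infty$. The mean attached to the elementary building block $\frac{(1+t)x}{x+t}$ is $\frac{1+t}{t}$ times the parallel sum $(A,B)\mapsto\bigl((tA)^{-1}+B^{-1}\bigr)^{-1}$, which satisfies $(vi)$ by virtue of its variational description $\langle(A^{-1}+B^{-1})^{-1}\xi,\xi\rangle=\inf_{\eta+\zeta=\xi}\bigl\{\langle A\eta,\eta\rangle+\langle B\zeta,\zeta\rangle\bigr\}$; since $m_f(A,B)=\int m_{f_t}(A,B)\,d\mu(t)$ by the functional calculus, the inequality passes to the $\mu$-average.

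For surjectivity, given $m\in\cam_{op}$ I would first note that $(vi)$, applied to a unitary $C$ and then to $C^{-1}$, forces the congruence identity $Cm(A,B)C^*=m(CAC^*,CBC^*)$; in particular $Cm(I,xI)C^*=m(I,xI)$ for every unitary $C$, so $m(I,xI)$ is a scalar matrix, say $f(x)I$ (this scalar is readily seen to be independent of the ambient dimension), and $f>0$ because $m$ takes values in $\cd_n$. Then $f(1)=1$ by $(i)$, while $tf(t^{-1})=f(t)$ follows from $(ii)$ together with the homogeneity $m(cA,cB)=cm(A,B)$. Next, the congruence identity holds equally for any invertible $C$, so taking $C=A^{1/2}$ gives $m(A,B)=A^{1/2}m(I,A^{-1/2}BA^{-1/2})A^{1/2}$, and the problem reduces to showing $m(I,T)=f(T)$ for every $T>0$. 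I would diagonalize $T=\sum_i\lambda_iP_i$: the congruence identity applied to unitaries commuting with $T$ shows $m(I,T)$ commutes with $T$, hence $m(I,T)=\sum_iP_im(I,T)P_i$; compressing by $P_i$ in $(vi)$ gives $P_im(I,T)P_i\le m(P_i,\lambda_iP_i)=f(\lambda_i)P_i$, while $P_i\le I$, $\lambda_iP_i\le T$ and monotonicity give $f(\lambda_i)P_i=m(P_i,\lambda_iP_i)\le m(I,T)$, hence the reverse inequality after compression by $P_i$. Therefore $m(I,T)=\sum_if(\lambda_i)P_i=f(T)$, so $m=m_f$; and $f$ is operator monotone because $0<A\le B$ in $M_n$ gives $f(A)=m(I,A)\le m(I,B)=f(B)$ by $(iv)$ and continuity.

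I expect the main obstacle to be the transformer inequality $(vi)$ and its dual use in the surjectivity argument. On the ``function to mean'' side the delicate bookkeeping is in the integral representation: correctly isolating the boundary atoms at $t=0$ and $t=\infty$, and verifying that the rescaling which turns a general $f\in\cf_{op}$ into a $\mu$-average of \emph{normalized} parallel sums is legitimate, so that $(vi)$ genuinely passes to the average. On the ``mean to function'' side, the passage from the scalar matrices $xI$ to an arbitrary positive $T$ in the identity $m(I,T)=f(T)$ is where all the mean axioms are needed at once, and assembling the compressions $P_im(I,T)P_i$ into $f(T)$ without circular reasoning is the step that demands the most care.
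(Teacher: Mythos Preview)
The paper does not prove this theorem; it is quoted as background and attributed to Kubo and Ando \cite{KuboAndo:1979/80}. So there is no ``paper's own proof'' to compare against. What you have written is essentially an outline of the original Kubo--Ando argument: the integral representation of $f$ as a superposition of (rescaled) parallel sums to obtain the transformer inequality, and the recovery of $f$ from $m$ via $m(I,xI)=f(x)I$ together with the congruence identity.

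Your sketch is broadly correct, but one step needs care. In the surjectivity argument you compress by a spectral projection $P_i$ and write $m(P_i,\lambda_iP_i)=f(\lambda_i)P_i$. In the paper's setting $m$ is only defined on ${\cal D}_n\times{\cal D}_n$, i.e.\ on \emph{strictly} positive matrices, so $m(P_i,\lambda_iP_i)$ is not a priori meaningful when $P_i\ne I$. Kubo and Ando handle this either by first extending $m$ to all positive operators via downward limits (using axiom $(iv)$ and continuity), or by carrying out the argument on the range of $P_i$ where the compressed operators are invertible. You should make one of these moves explicit; as written, both the inequality $P_im(I,T)P_i\le m(P_i,\lambda_iP_i)$ and the reverse inequality $m(P_i,\lambda_iP_i)\le m(I,T)$ invoke $m$ at singular arguments. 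A similar remark applies to your use of the homogeneity $m(cA,cB)=cm(A,B)$, which is not an axiom but follows from applying $(vi)$ with $C=\sqrt{c}\,I$ and with $C=c^{-1/2}I$; this is fine, just worth stating.
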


If ${\cal N}$ is a differentiable manifold we denote
by $T_{\rho} \cal N$ the tangent space to $\cal N$ at the point
$\rho \in {\cal N}$.  Recall that there exists a natural
identification
 of $T_{\rho}{\cal D}^1_n$ with the space of self-adjoint traceless
 matrices; namely, for any $\rho \in {\cal D}^1_n $
$$
T_{\rho}{\cal D}^1_n =\{A \in M_n|A=A^* \, , \, \hbox{Tr}(A)=0 \}.
$$
A Markov morphism is a completely positive and trace preserving operator $T:
M_n \to M_m$. A {\sl monotone metric} is a family of Riemannian metrics $g=\{g^n\}$
 on $\{{\cal D}^1_n\}$, $n \in \mathbb{N}$, such that
 $$
 g^m_{T(\rho)}(TX,TX) \leq g^n_{\rho}(X,X)
 $$
 holds for every choice of Markov morphism $T:M_n \to M_m$, faithful state $\rho \in
 {\cal D}^1_n,$ and $X \in T_\rho {\cal D}^1_n$.
Usually monotone metrics are normalized in such a way that
$[A,\rho]=0$ implies $g_{\rho} (A,A)={\rm Tr}({\rho}^{-1}A^2)$.
A monotone metric is also called (an example of) {\sl quantum Fisher information} (QFI). This notation
is inspired by Chentsov's uniqueness theorem for commutative monotone metrics \cite{Chentsov:1982}.

Define $L_{\rho}(A)= \rho A$ and $R_{\rho}(A)= A\rho$, and observe
 that they are commuting positive superoperators on $M_{n,sa}$. For any $f\in {\cal F}_{op}$ one may also define the positive (non-linear) superoperator
$m_f(L_{\rho},R_{\rho})$.
The fundamental theorem of monotone metrics may be stated in the following way:

\begin{theorem} (see \cite{Petz:1996}).
    There exists a bijective correspondence between monotone metrics (quantum Fisher informations)
    on ${\cal D}^1_n$ and normalized symmetric operator monotone
    functions $f\in {\cal F}_{op}$.  The correspondence is given by
    the formula
    $$
   \langle A,B \rangle_{\rho,f}={\rm Tr}(A\cdot
    m_f(L_{\rho},R_{\rho})^{-1}(B))
    $$
    for positive matrices $ A $ and $ B. $
\end{theorem}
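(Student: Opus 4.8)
The plan is to establish the two halves of the asserted bijection separately. Write $c_f(\rho):=m_f(L_\rho,R_\rho)$, the positive superoperator attached to $f\in\mathcal F_{op}$, so that the claimed correspondence sends $f$ to the bilinear form $g_\rho(A,B):=\langle A,c_f(\rho)^{-1}B\rangle={\rm Tr}\big(A\,c_f(\rho)^{-1}(B)\big)$. I would first check that each $f\in\mathcal F_{op}$ gives a monotone metric, and then, conversely, that every monotone metric arises in this way for a unique such $f$; the converse is where the real work lies.

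\textbf{From $f$ to a QFI.} Since $L_\rho$ and $R_\rho$ are commuting positive-definite superoperators on $(M_n,\langle\cdot,\cdot\rangle)$, the Kubo--Ando formula together with the joint functional calculus gives $c_f(\rho)=L_\rho^{1/2}f\big(L_\rho^{-1/2}R_\rho L_\rho^{-1/2}\big)L_\rho^{1/2}$, which is positive definite because $f>0$ on $(0,\infty)$, hence invertible; thus $g_\rho$ is a genuine inner product on $M_{n,sa}$, smooth in $\rho$, whose restriction to $T_\rho\mathcal D^1_n$ is a Riemannian metric. Diagonalizing $\rho$ and $A$ simultaneously when $[A,\rho]=0$ and using $m_f(\lambda,\lambda)=\lambda$ gives $c_f(\rho)(A)=\rho A$, hence the normalization $g_\rho(A,A)={\rm Tr}(\rho^{-1}A^2)$. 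The substance is the monotonicity inequality: for a Markov morphism $T\colon M_n\to M_m$ one has $g^m_{T\rho}(TA,TA)=\langle A,\,T^{*}c_f(T\rho)^{-1}T\,A\rangle$ (using that $T$ is $*$-preserving and $T^{*}$ is its Hilbert--Schmidt adjoint), so it suffices to prove the superoperator inequality $T^{*}c_f(T\rho)^{-1}T\le c_f(\rho)^{-1}$ on $M_{n,sa}$. I would obtain this from an integral representation of $1/f$ over $[0,1]$, writing $c_f(\rho)^{-1}$ as an average of the elementary superoperators built from $(L_\rho+sR_\rho)^{-1}$ and $(sL_\rho+R_\rho)^{-1}$, so that for each fixed $s$ the claim reduces to the operator convexity of $t\mapsto t^{-1}$ together with the Choi/Kadison--Schwarz inequality $T^{*}(X)^{-1}\le T^{*}(X^{-1})$ for the unital completely positive map $T^{*}$, plus the elementary compatibility of $L_\rho,R_\rho$ with $T$; integrating in $s$ then finishes it.

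\textbf{From a QFI to $f$.} Let $g=\{g^n\}$ be a monotone metric. Unitary conjugations $\mathrm{Ad}_U$ are Markov morphisms with Markov inverse, so $g$ is $\mathrm{Ad}_U$-invariant; combined with the normalization this forces, at a diagonal state $\rho=\operatorname{diag}(\lambda_1,\dots,\lambda_n)$ with distinct entries, the expression $g_\rho(A,A)=\sum_{i,j}c(\lambda_i,\lambda_j)\,|A_{ij}|^2$ for a symmetric, positive, smooth function $c$ with $c(\lambda,\lambda)=\lambda^{-1}$ (the residual torus acts on the off-diagonal directions through distinct characters, so they decouple, while the diagonal block is pinned down by the normalization; the general diagonal $\rho$ follows by continuity). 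Comparing $g$ at $\rho$ with $g$ at $\rho\otimes(I_k/k)$ and using monotonicity in both directions, via the mutually ``inverse'' Markov morphisms $X\mapsto X\otimes(I_k/k)$ and the partial trace, yields the homogeneity $c(t\lambda,t\mu)=t^{-1}c(\lambda,\mu)$; hence $c(\lambda,\mu)=\big(\mu f(\lambda/\mu)\big)^{-1}$ with $f(1)=1$, and the symmetry of $c$ gives $f(x)=xf(x^{-1})$. The remaining, decisive point is that $f$ must be operator monotone: here I would transfer the monotonicity of $g$ under Markov morphisms into the transformer inequality $C\,m_f(A,B)\,C^{*}\le m_f(CAC^{*},CBC^{*})$ for the Kubo--Ando mean $m_f(A,B)=A^{1/2}f(A^{-1/2}BA^{-1/2})A^{1/2}$, and then invoke the Kubo--Ando theorem stated above, according to which this inequality characterizes operator monotonicity among normalized functions. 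As $f$ is already normalized and symmetric we conclude $f\in\mathcal F_{op}$ with $g=\langle\cdot,\cdot\rangle_{\rho,f}$, and injectivity of the correspondence is automatic since $f$ is recovered from $g$ via $f(x)=1/c(x,1)$.

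\textbf{The main obstacle.} The crux is that last step: manufacturing, from an arbitrary $C\in M_n$, honest completely positive \emph{trace-preserving} maps between (possibly different) matrix algebras whose action on the relevant two-by-two blocks reproduces the conjugation $X\mapsto CXC^{*}$, while keeping the trace normalization under control --- only then does monotonicity of $g$ under \emph{all} Markov morphisms deliver the full transformer inequality, and hence operator monotonicity of $f$. (Equivalently one could try to read the L\"owner integral representation of $1/f$ directly off the monotonicity inequalities applied to a continuum of two-level systems; the difficulty is of the same nature.) By comparison, making the integral representation of $1/f$ interact cleanly with a non-unital, non-trace-preserving symbol in the forward direction is technical rather than conceptual.
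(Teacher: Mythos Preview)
The paper does not prove this theorem at all: it is quoted as a known result with the attribution ``(see \cite{Petz:1996})'' and is used as background for the main Section~\ref{tilde}. There is therefore no proof in the paper to compare your proposal against.

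That said, your outline is broadly faithful to Petz's original argument. The forward direction is essentially correct as sketched: the integral decomposition of $1/f$ reduces monotonicity to the case of the elementary kernels $(x+\lambda y)^{-1}$, and for those the inequality $T^*(X^{-1})\ge (T^*X)^{-1}$ for unital completely positive $T^*$ (operator convexity of the inverse) together with $T^*L_{T\rho}=L_\rho T^*$, $T^*R_{T\rho}=R_\rho T^*$ does the job. In the backward direction your extraction of the kernel $c(\lambda,\mu)$ via unitary invariance, and of its homogeneity via the mutually inverse Markov maps $X\mapsto X\otimes I_k/k$ and the partial trace, is again in line with Petz.

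Where your proposal is weakest is exactly where you flag it. You propose to manufacture the full transformer inequality $C\,m_f(A,B)\,C^*\le m_f(CAC^*,CBC^*)$ from CPTP monotonicity and then invoke Kubo--Ando; but arbitrary conjugations are not Markov morphisms, and there is no clean way to simulate them by CPTP maps while controlling the trace. Petz does not go through the transformer inequality at all: he tests monotonicity only against the \emph{pinching} (conditional expectation) onto a $2\times 2$ diagonal block, which \emph{is} a Markov morphism, and reads off from the resulting two-variable inequality that $1/c(x,1)=f(x)$ is matrix monotone of every order, hence operator monotone by L\"owner's theorem. If you replace your ``main obstacle'' paragraph by that pinching argument, the backward direction closes.
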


\section{Regular functions and extendable Fisher informations}

\begin{definition}

For $f \in {\cal F}_{op}$ we define $f(0)=\lim_{x\to 0} f(x).$ We say that a function $f \in {\cal
F}_{op}$ is regular if $f(0) \not= 0$ and non-regular if $f(0)= 0,$ cf.~\cite{PetzSudar:1996,Hansen:2006}.

\end{definition}

\begin{definition}
A quantum Fisher information is extendable if its radial limit exists and is a Riemannian metric on the real projective space generated by the pure states.
\end{definition}

For the definition of radial limit see \cite{PetzSudar:1996} where the following fundamental result is proved:

\begin{theorem}

An operator monotone function $f \in {\cal F}_{op}$ is regular, if and only if
$ \langle \cdot, \cdot \rangle_{\rho,f}$ is extendable.

\end{theorem}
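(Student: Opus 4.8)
The plan is to reduce the theorem to an explicit asymptotic computation of the quadratic form $\langle\cdot,\cdot\rangle_{\rho,f}$ along a radial path, reading everything off the diagonalization of the superoperator $m_f(L_\rho,R_\rho)$.

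First I would fix the dimension $n$ and a pure state $\rho_0\in\mathcal{D}_n^1$; by unitary invariance of the whole construction we may assume $\rho_0=E_{11}$, a matrix unit. As a concrete realization of the radial approach to $\rho_0$ I would use the segment $\rho_t=(1-t)\rho_0+\tfrac{t}{n}I$ for $t\in(0,1)$, which has the advantage of staying diagonal in the fixed basis, with eigenvalues $\lambda_1(t)=1-\tfrac{n-1}{n}t\to 1$ and $\lambda_j(t)=\tfrac{t}{n}\to 0$ for $j\ge 2$. Next I would identify the tangent space to the manifold of pure states at $\rho_0$ (a copy of $\mathbb{CP}^{n-1}$) with the self-adjoint matrices of the form $A=|e_1\rangle\langle v|+|v\rangle\langle e_1|$ with $v\perp e_1$; equivalently, the self-adjoint $A$ whose only nonzero entries are $A_{1j}$ and $A_{j1}$ with $j\ge 2$. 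On such vectors the Fubini--Study metric at $\rho_0$ is, up to normalization, the intrinsic quantity $\|v\|^2=\sum_{j\ge 2}|A_{1j}|^2$.

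The computational core is then immediate. Since $L_{\rho_t}$ and $R_{\rho_t}$ act on the matrix unit $E_{ij}$ by multiplication by $\lambda_i(t)$ and $\lambda_j(t)$ respectively, and commute, we get $m_f(L_{\rho_t},R_{\rho_t})(E_{ij})=m_f(\lambda_i(t),\lambda_j(t))\,E_{ij}$, where $m_f(x,y)=y\,f(x/y)=x\,f(y/x)$; hence
$$
\langle A,A\rangle_{\rho_t,f}=\sum_{i,j}\frac{|A_{ij}|^2}{m_f(\lambda_i(t),\lambda_j(t))}
=2\sum_{j\ge 2}\frac{|A_{1j}|^2}{m_f(\lambda_1(t),\lambda_j(t))}
$$
for $A$ tangent to the pure states, using symmetry of $m_f$. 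Writing $m_f(\lambda_1,\lambda_j)=\lambda_1\,f(\lambda_j/\lambda_1)$ and letting $t\to 0^+$, we have $\lambda_j(t)/\lambda_1(t)\to 0$ and $\lambda_1(t)\to 1$, so $m_f(\lambda_1(t),\lambda_j(t))\to f(0)$ by the very definition of $f(0)$. Therefore: if $f$ is regular, $\langle A,A\rangle_{\rho_t,f}\to\tfrac{2}{f(0)}\sum_{j\ge 2}|A_{1j}|^2$, a positive multiple of the Fubini--Study metric, so the radial limit exists and is a Riemannian metric and $\langle\cdot,\cdot\rangle_{\rho,f}$ is extendable; if $f$ is non-regular, then $f(0)=0$ forces $\langle A,A\rangle_{\rho_t,f}\to+\infty$ for every nonzero tangent vector $A$ to the pure states, so no limiting Riemannian metric can exist and extendability fails. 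This yields both implications.

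The one genuinely delicate point --- and what I expect to be the main obstacle --- is that the radial limit, in the sense of \cite{PetzSudar:1996}, must be shown to be independent of the chosen radial path; the segment $\rho_t=(1-t)\rho_0+\tfrac{t}{n}I$ is convenient precisely because it stays diagonal, but the definition refers to an arbitrary radial approach. For a general interior state $\tau$ in place of $I/n$, $\rho_t=(1-t)\rho_0+t\tau$ need not be diagonal, so I would invoke first-order perturbation theory: the spectral gap between the top eigenvalue (which $\to 1$) and the remaining ones (which are $O(t)$, arising from $t$ times the compression of $\tau$ to the range of $I-\rho_0$) is bounded below, so the top eigenvector stays within $O(t)$ of $e_1$ and every ratio $\lambda_j(t)/\lambda_1(t)\to 0$; substituting these estimates into the same formula recovers the identical limit, the surviving sum being the basis-independent quantity $\|v\|^2$. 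I would also verify compatibility with the standard normalization $g_\rho(A,A)=\mathrm{Tr}(\rho^{-1}A^2)$ in the commuting case, and that nothing in the argument depends on $n$. Everything else follows directly from the diagonalization formula for the quantum Fisher information together with the defining symmetry $f(x)=xf(x^{-1})$, which is exactly what makes the off-diagonal weight $m_f(\lambda_1,\lambda_j)$ tend to $f(0)$ rather than to $0$ or $\infty$.
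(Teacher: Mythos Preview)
The paper does not give a proof of this theorem at all: it is quoted as a known result from \cite{PetzSudar:1996} (see the sentence immediately preceding the statement), so there is no ``paper's own proof'' to compare your attempt against.

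For what it is worth, your sketch is essentially the argument carried out in the cited reference: one diagonalizes $\rho_t$, uses that $m_f(L_{\rho_t},R_{\rho_t})$ acts on matrix units by multiplication by $m_f(\lambda_i,\lambda_j)$, and reads off that the off-diagonal weights $m_f(\lambda_1(t),\lambda_j(t))=\lambda_1(t)\,f(\lambda_j(t)/\lambda_1(t))$ tend to $f(0)$ along any radial approach to a pure state. Your computation and the dichotomy you draw (finite positive multiple of the Fubini--Study form when $f(0)>0$, divergence when $f(0)=0$) are correct, and the point you flag about path independence is indeed the only place requiring care; the perturbation argument you outline is adequate.
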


\begin{remark}
The reader should be aware that there is no negative connotation associated with the qualification ``non-regular".
For example, a very important quantum Fisher information for quantum physics (see \cite{FickSauermann:1990}), namely the Kubo-Mori metric (related to the function $f(x)=\frac{x-1}{\log x})$, is non-regular.
\end{remark}

\section{Some preliminary notions}

\begin{definition}
The Morozova-Chentsov function $ c_f $ associated to a function $f \in {\cal F}_{op}$ is given by
$$
c_f(x,y)=\frac{1}{m_f(x,y)}\qquad x,y>0.
$$
If $f$ is regular one can also define the function
$$
d_f(x,y)=\frac{x+y}{f(0)}-(x-y)^2c_f(x,y).
$$
Another useful definition is the following
$$
c_{\lambda}(x,y)=\frac{1+\lambda}{2} \left(\frac{1}{x+\lambda y}+\frac{1}{\lambda x +y} \right)\qquad \lambda \in [0,1].
$$
\end{definition}

In the result that follows we synthesize  Corollaries 2.3, 2.4 and Proposition 3.4 of the paper \cite{Hansen:2006}, see also the beginning of Section 2 in \cite{AudenaertCaiHansen:2008}.

\begin{theorem} \label{resume}
Given $f \in {\cal F}_{op}$ there exist a unique (canonical) probability measure $\mu$ on [0,1] such that
\[
\begin{array}{rl}
\displaystyle\frac{1}{f(t)}&=\displaystyle\int_0^1 c_{\lambda}(t,1) \, d\mu(\lambda)\qquad t>0,\\[3ex]
c_f(x,y)&\displaystyle=\int_0^1 c_{\lambda}(x,y) \, d\mu(\lambda)\qquad x,y>0,\\[3ex]
d_f(x,y)&\displaystyle=\int_0^1 xy \cdot c_{\lambda}(x,y) \, \frac{(1+\lambda)^2}{\lambda}\, d\mu(\lambda)\qquad x,y>0.
\end{array}
\]
Furthermore, $d_f$ is operator concave as a function of two variables.
\end{theorem}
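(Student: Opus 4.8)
\noindent\emph{Proof proposal.} The plan is to base everything on the first of the three displayed formulas, which is the canonical integral representation of $f$, and to read off the other two and the concavity statement from it by short manipulations. To obtain the first formula I would argue from L\"owner's theorem. Since $f$ is operator monotone and strictly positive on $(0,\infty)$, the function $-1/f$ is operator monotone on $(0,\infty)$ and bounded near $+\infty$, so its Nevanlinna--L\"owner representation carries no linear term and takes the form $1/f(t)=\gamma+\int_{(0,\infty)}\bigl((s+t)^{-1}-s(1+s^2)^{-1}\bigr)\,d\nu(s)$ for a constant $\gamma$ and a positive measure $\nu$ with $\int(1+s^2)^{-1}\,d\nu(s)<\infty$. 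Imposing the symmetry $tf(t^{-1})=f(t)$ forces $\nu$ to be invariant under $s\mapsto s^{-1}$; pairing the contribution at $s$ with that at $s^{-1}$ collapses the integral onto $(0,1]$ with combined kernels proportional to $c_\lambda(t,1)=\frac{1+\lambda}{2}\bigl((t+\lambda)^{-1}+(1+\lambda t)^{-1}\bigr)$, the leftover constant is absorbed, and the normalization $f(1)=1$ together with $c_\lambda(1,1)=1$ makes the resulting measure $\mu$ a probability measure on $[0,1]$. Uniqueness of $\mu$ is inherited from the uniqueness in the L\"owner representation. I expect this step to absorb by far the most work; the rest is bookkeeping.

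Granting $1/f(t)=\int_0^1 c_\lambda(t,1)\,d\mu(\lambda)$, the second formula is immediate: $c_f(x,y)=m_f(x,y)^{-1}$ and each $c_\lambda(x,y)$ are homogeneous of degree $-1$ and $c_f(t,1)=1/f(t)$, so substituting $t=x/y$ and dividing by $y$ under the integral gives $c_f(x,y)=\int_0^1 c_\lambda(x,y)\,d\mu(\lambda)$. For the third formula, let $t\to 0^+$ in the first formula; monotone convergence yields
$$\frac{1}{f(0)}=\int_0^1 c_\lambda(0,1)\,d\mu(\lambda)=\int_0^1\frac{(1+\lambda)^2}{2\lambda}\,d\mu(\lambda),$$
which is finite precisely when $f$ is regular, that is, precisely when $d_f$ is defined. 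Inserting this and the second formula into $d_f(x,y)=\frac{x+y}{f(0)}-(x-y)^2 c_f(x,y)$ and gathering everything under one integral, what remains is the pointwise identity
$$\frac{(x+y)(1+\lambda)^2}{2\lambda}-(x-y)^2 c_\lambda(x,y)=xy\,c_\lambda(x,y)\,\frac{(1+\lambda)^2}{\lambda},$$
which reduces, via $c_\lambda(x,y)=\dfrac{(1+\lambda)^2(x+y)}{2(x+\lambda y)(\lambda x+y)}$, to the elementary identity $(x+\lambda y)(\lambda x+y)-\lambda(x-y)^2=(1+\lambda)^2 xy$.

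It remains to show that $d_f$ is operator concave in its two variables. Rewrite the third formula as $d_f=\int_0^1 g_\lambda\,d\tilde\mu(\lambda)$, where $d\tilde\mu(\lambda)=\frac{(1+\lambda)^2}{\lambda}\,d\mu(\lambda)$ and
$$g_\lambda(x,y)=xy\,c_\lambda(x,y)=\frac{1+\lambda}{2\lambda}\bigl((x:\lambda y)+(\lambda x:y)\bigr),$$
with $a:b=(a^{-1}+b^{-1})^{-1}$ the parallel sum. By the displayed expression for $1/f(0)$ and the regularity of $f$, $\tilde\mu$ is a finite positive measure on $(0,1]$ with no mass at the origin, so only values $\lambda>0$ contribute. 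For each such $\lambda$ the parallel sum is jointly operator concave (Ando), hence so is $g_\lambda$, being a positive multiple of a sum of parallel sums composed with linear maps; and a Bochner integral of jointly operator concave functions against a finite positive measure is again jointly operator concave. Hence $d_f$ is operator concave in the two variables, in line with its being homogeneous of degree one.
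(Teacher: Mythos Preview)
Your proposal is correct and essentially reproduces the argument the paper defers to its references: the paper itself gives no proof of this theorem but quotes Corollaries~2.3, 2.4 and Proposition~3.4 of \cite{Hansen:2006} (and \cite{AudenaertCaiHansen:2008}), where the first formula is obtained exactly as you describe---by applying the L\"owner representation to the operator monotone decreasing function $1/f$, using the symmetry $f(t)=tf(t^{-1})$ to pair the kernels at $s$ and $s^{-1}$ into $c_\lambda(t,1)$, and normalizing via $f(1)=1$---while the second and third formulas follow from homogeneity and the algebraic identity $(x+\lambda y)(\lambda x+y)-\lambda(x-y)^2=(1+\lambda)^2xy$, and the operator concavity of $d_f$ is deduced from the joint operator concavity of the parallel sum. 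Your handling of the regularity hypothesis (needed so that $\int_0^1\frac{(1+\lambda)^2}{2\lambda}\,d\mu(\lambda)=1/f(0)<\infty$ and hence $\mu(\{0\})=0$) is also exactly what is required.
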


\section{The correspondence $f\to\tilde f$ and its properties} \label{tilde}

We introduce the sets of regular and non-regular functions
$$
{\cal F}_{op}^{\, r}:=\{f\in {\cal F}_{op}| \quad f(0) \not= 0 \},  \quad
{\cal F}_{op}^{\, n}:=\{f\in {\cal F}_{op}| \quad f(0) = 0 \}
$$
and notice that trivially ${\cal F}_{op}={\cal F}_{op}^{\, r}\cup{\cal F}_{op}^{\, n}$\,.

\begin{definition}
For $f \in {\cal F}_{op}^{\, r}$ we set
$$
\tilde{f}(x)=\frac{1}{2}\left[ (x+1)-(x-1)^2 \frac{f(0)}{f(x)}
\right]\qquad x>0.
$$
 We also write ${\cal G}(f)={\tilde f},$ cf. {\rm \cite{Hansen:2006,GibiliscoImparatoIsola:2007,AudenaertCaiHansen:2008}}.
\end{definition}

Notice that one has the identity
$$
\tilde f(x)=\frac{f(0)}{2} \, d(x,1)\qquad x>0.
$$

\begin{theorem}
The correspondence $ f \to \tilde f $ is a bijection between ${\cal F}_{op}^{\, r}$ and ${\cal F}_{op}^{\, n}$.

\end{theorem}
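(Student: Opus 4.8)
The plan is to establish three things: that $\tilde f$ lands in $\cf_{op}^{\,n}$ whenever $f\in\cf_{op}^{\,r}$, that the map is injective, and that it is surjective; injectivity and surjectivity can be handled together by writing down an explicit inverse. First I would use the identity $\tilde f(x)=\tfrac{f(0)}{2}\,d_f(x,1)$ together with Theorem~\ref{resume}: the integral representation
$$
d_f(x,y)=\int_0^1 xy\cdot c_\lambda(x,y)\,\frac{(1+\lambda)^2}{\lambda}\,d\mu(\lambda)
$$
shows that $d_f(x,1)$ is, up to the constant $f(0)$, a positive combination of the elementary kernels $x\,c_\lambda(x,1)$. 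The key computational observation is that $x\,c_\lambda(x,1)$ is itself (a multiple of) a function in $\cf_{op}$ evaluated at $x$: indeed $x\,c_\lambda(x,1)=\tfrac{1+\lambda}{2}\bigl(\tfrac{x}{x+\lambda}+\tfrac{x}{\lambda x+1}\bigr)$, and each summand is operator monotone in $x$. Summing against the probability measure and normalizing shows $\tilde f$ is operator monotone; the symmetry $x\tilde f(x^{-1})=\tilde f(x)$ and normalization $\tilde f(1)=1$ follow directly from the corresponding properties of $c_f$ (or of $d_f$), and $\tilde f(0)=0$ is immediate from the formula since the bracket becomes $\tfrac12[1-1\cdot f(0)/f(0)]\cdot(\text{something})$—more precisely $\tilde f(0)=\tfrac12[1-1]=0$. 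So $\tilde f\in\cf_{op}^{\,n}$.

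Next I would construct the inverse map $\ch:\cf_{op}^{\,n}\to\cf_{op}^{\,r}$ explicitly. Starting from $\tilde f(x)=\tfrac12[(x+1)-(x-1)^2 f(0)/f(x)]$ and treating $f(0)$ as an as-yet-unknown positive constant $c$, one solves algebraically:
$$
\frac{c}{f(x)}=\frac{(x+1)-2\tilde f(x)}{(x-1)^2},
$$
so $f(x)=c\,(x-1)^2\big/\bigl((x+1)-2\tilde f(x)\bigr)$. The constant $c$ is then pinned down by the normalization $f(1)=1$: taking $x\to1$ and using $\tilde f(1)=1$, a L'Hôpital / Taylor expansion of the right-hand side near $x=1$ gives $c=\lim_{x\to1}\bigl((x+1)-2\tilde f(x)\bigr)/(x-1)^2 = (1-\tilde f''(1)/2)^{-1}$ or a similar explicit expression in terms of $\tilde f''(1)$ (which is finite and makes the denominator behave like a positive multiple of $(x-1)^2$ near $1$ because $\tilde f(1)=1$ and $\tilde f'(1)=1/2$ by symmetry). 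This defines $g:=\ch(\tilde f)$, and one checks $g(1)=1$, $xg(x^{-1})=g(x)$, $g(0)=c\neq0$ (regularity), and operator monotonicity of $g$. A clean way to get operator monotonicity of $g$ without redoing integral representations is to observe that $\ch$ is a two-sided algebraic inverse of $\cg$ on the level of formulas, so $g=\ch(\cg(h))=h$ for any $h\in\cf_{op}^{\,r}$; hence every $g$ in the image of $\ch$ equals some genuine regular operator monotone function, provided we already know $\cg$ is defined on all of $\cf_{op}^{\,r}$ and $\ch$ on all of $\cf_{op}^{\,n}$ and that they invert each other as formulas.

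The main obstacle is precisely showing that $\ch$ genuinely maps \emph{into} $\cf_{op}^{\,r}$—i.e.\ that the function $g(x)=c(x-1)^2/((x+1)-2\tilde f(x))$ produced from an arbitrary non-regular $\tilde f$ is operator monotone. The algebraic identities $\cg\circ\ch=\mathrm{id}$ and $\ch\circ\cg=\mathrm{id}$ are routine once the formulas are in hand, but they only establish a bijection between $\cf_{op}^{\,r}$ and its image $\cg(\cf_{op}^{\,r})$ inside $\cf_{op}^{\,n}$; surjectivity onto \emph{all} non-regular functions requires knowing that $\ch$ respects operator monotonicity. For this I would again invoke Theorem~\ref{resume}: given $h\in\cf_{op}^{\,n}$ with canonical measure $\mu$, the denominator $(x+1)-2h(x)$ should be rewritten, via the representations of $1/h$ and $c_h$, as $f(0)\cdot(x-1)^2 c_f(x,1)$ for the candidate regular $f$—in other words one shows that the measure $\mu$ attached to $h$, after multiplication by the density $\lambda/(1+\lambda)^2$ and renormalization, is a probability measure on $[0,1]$ (this uses that $\int_0^1 \frac{\lambda}{(1+\lambda)^2}\,d\nu(\lambda)<\infty$ and that the non-regularity $h(0)=0$ corresponds to $\int \lambda^{-1}d\nu = \infty$, which is exactly compensated). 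That renormalized measure is then the canonical measure of a bona fide $f\in\cf_{op}$, automatically regular, and unwinding the definitions gives $\tilde f=h$. The bookkeeping with these two measures—verifying finiteness, total mass one, and that the correspondence $\mu\leftrightarrow\nu$ is an honest bijection between probability measures on $[0,1]$—is the technical heart of the argument; everything else is formula manipulation.
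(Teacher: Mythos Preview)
Your overall architecture matches the paper's: first show $\tilde f\in\cf_{op}^{\,n}$ via the integral representation from Theorem~\ref{resume}, then establish surjectivity by a change of measure. Your treatment of the forward direction is fine (the paper argues via operator concavity of $d_f$, you argue directly that each $x\,c_\lambda(x,1)$ is operator monotone; both work). Injectivity via the explicit algebraic inverse is also fine and in fact anticipates Proposition~\ref{inversion}.

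The surjectivity sketch, however, has a genuine wrong turn. You propose to start from the canonical measure $\mu$ of $h\in\cf_{op}^{\,n}$, meaning the measure in the representation $1/h(t)=\int_0^1 c_\lambda(t,1)\,d\mu(\lambda)$, and then pass to $d\mu_f=\text{const}\cdot\tfrac{\lambda}{(1+\lambda)^2}\,d\mu$. But if you unwind this, the resulting $f$ satisfies
\[
\tilde f(t)=\frac{f(0)}{2}\int_0^1 t\,c_\lambda(t,1)\,\frac{(1+\lambda)^2}{\lambda}\,d\mu_f(\lambda)
=\int_0^1 t\,c_\lambda(t,1)\,d\mu(\lambda)=\frac{t}{h(t)}=h^\sharp(t),
\]
not $h(t)$. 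The representation of $1/h$ does not give you a usable integral formula for $h$ itself, and hence not for $(x+1)-2h(x)$ either. What the paper does instead is look at $h(t)/t=1/h^\sharp(t)$: since $h^\sharp\in\cf_{op}$, Theorem~\ref{resume} gives $h(t)/t=\int_0^1 c_\lambda(t,1)\,d\mu(\lambda)$ for the canonical measure of $h^\sharp$, and \emph{this} measure is the one you multiply by $\tfrac{2\lambda}{(1+\lambda)^2}$ and renormalize to obtain $\mu_f$. So the missing ingredient is the passage through the involution $h\mapsto h^\sharp$.

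There is also a technical point you skip: one must check that this measure $\mu$ has no atom at $\lambda=0$. The density $\tfrac{2\lambda}{(1+\lambda)^2}$ vanishes at $0$, so any atom there would be annihilated when forming $\mu_f$, and the final identification $\tilde f=h$ would fail by exactly that lost mass. The paper rules this out by observing that an atom at $0$ forces $h(t)\ge\mu(\{0\})\tfrac{t+1}{2}$, contradicting $h(0)=0$. Your parenthetical about ``$\int\lambda^{-1}\,d\nu=\infty$ being exactly compensated'' is pointing in a related direction but is not the right condition; the relevant fact is the absence of an atom, not the divergence of a moment.
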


\begin{proof}

Take a function $ f\in{\cal F}_{op}^{\, r} $ and consider $ \tilde f. $ It was noticed in \cite{GibiliscoImparatoIsola:2007} that $\tilde f$ is a non-regular
function in $ {\cal F}_{op}. $ Indeed, it is easy to see that ${\tilde f}(0)=0$, ${\tilde f}(1)=1$ and $x {\tilde f}(x^{-1})=\tilde f(x)$ for $ x>0. $
Furthermore, since $d_f$ is operator concave, so is $\tilde f$. But since a positive operator concave function is operator monotone 
(Theorem 2.5 in \cite{HansenPedersen:1982}) we get the desired conclusion.

It is easy to establish that the correspondence  $ f \to \tilde f $ is injective.

It remains to show that the correspondence $f \to {\tilde f}$ is surjective. Therefore, suppose that $g $ is a non-regular function
in $ {\cal F}_{op}.$ We have to find a regular function $f\in{\cal F}_{op} $ such that ${\tilde f}=g$.
Consider the function
\[
h(t)=\frac{g(t)}{t}=\frac{1}{g^{\sharp}(t)}\qquad t>0,
\]
where $ g\to g^{\sharp} $ is the involution of $ {\cal F}_{op} $ given by
\[
g^{\sharp}(t)=\frac{t}{g(t)}\qquad t>0,
\]
cf. [Definition 2.5] in \cite{AudenaertCaiHansen:2008}. It follows that $ h $ is operator monotone decreasing, $ h(1)=1, $ and
$ h $ satisfies the functional equation
\[
h(t^{-1})=\frac{g(t^{-1})}{t^{-1}}=t\cdot g(t^{-1})=g(t)=t\cdot h(t)\qquad t>0.
\]
Therefore, there exists \cite[Corollary 2.3]{Hansen:2006} a probability measure $ \mu $ on the unit interval
such that
\begin{equation}\label{variant canonical representation for g}
h(t)=\int_0^1 \frac{1+\lambda}{2}\left(\frac{1}{t+\lambda}+\frac{1}{1+t\lambda}\right)d\mu(\lambda)
\qquad t>0.
\end{equation}
Suppose for a moment that $ \mu $ has an atom in zero. Then $ h $ is of the form
\[
h(t)=\mu(0)\frac{t+1}{2t}+\tilde h(t),
\]
where $ \tilde h(t) $ is some non-negative operator monotone function. Consequently,
\[
g(t)=t\cdot h(t)\ge \mu(0)\frac{t+1}{2}\qquad t>0,
\]
contradicting the choice of $ g $ as a non-regular function in $ {\cal F}_{op}. $
We conclude that $ \mu $ has no atom in zero. In particular, if one defines the constant
\[
C=\int_0^1 \frac{2\lambda}{(1+\lambda)^2}\,d\mu(\lambda),
\]
then $C>0$.
As a consequence we may define another probability measure $ \nu $ on the unit interval by setting
\[
d\nu(\lambda)=\frac{1}{C}\cdot \frac{2\lambda}{(1+\lambda)^2}\,d\mu(\lambda).
\]
We now define a function $ f $ in the positive half-axis by setting
\[
\frac{1}{f(t)}:=\int_0^1 \frac{1+\lambda}{2}\left(\frac{1}{t+\lambda}+\frac{1}{1+t\lambda}\right)d\nu(\lambda)
\qquad t>0.
\]
Since the right hand side is operator monotone decreasing, we obtain that $ f $ is operator monotone (increasing). Since also
$ f(1)=1 $ and $ f $ satisfies the functional equation $ f(t)=tf(t^{-1}), $ we realize that
$ f\in{\cal F}_{op}. $ Finally, since the limit
\[
\lim_{t\to 0}\frac{1}{f(t)}=\frac{1}{C}>0,
\]
we conclude that $ f $ is a regular function in $ {\cal F}_{op}\,. $  Note that the measure $d\nu$ coincides with the canonical measure associated to $f$ according to Theorem \ref{resume}.
The function $ \tilde f $  may be written as
\[
\tilde f(t)=\frac{f(0)}{2}\, d_f(t,1)\qquad t>0,
\]
where
\[
d(t,1)=\int_0^1 t\, \frac{1+\lambda}{2}\left(\frac{1}{t+\lambda}+\frac{1}{1+t\lambda}\right)
\frac{(1+\lambda)^2}{\lambda}\, d\nu(\lambda)\qquad t>0.
\]
Inserting $ f(0)=C $ and the measure $ \nu $ we obtain
\[
\tilde f(t)=\frac{f(0)}{2}\, t\int_0^1 \frac{1+\lambda}{2}\left(\frac{1}{t+\lambda}+\frac{1}{1+t\lambda}\right)\,d\mu(\lambda)
=t\cdot h(t)=g(t)\qquad t>0,
\]
where we used that $ \mu $ has no mass in zero.
This ends the proof.
\end{proof}

\section{Some applications}

\subsection{Quantum Fisher information in terms of quantum covariances}

The quantum Fisher information (QFI) is determined, as noted in the standard references on the subject, when we know the metric on the non-commuting part of the tangent spaces. We therefore have to understand what happens for the following kind of scalar products:

\begin{equation}
 \langle i[\rho,A], i[\rho,B]\rangle_{\rho,f}.
\end{equation}
Introducing the {\em quantum $g$-covariance}
$$
{\rm Cov}_{\rho}^g (A,B):={\rm Tr}(m_{g}(L_{\rho},R_{\rho})(A_0)B_0),
$$
where $A_0:= A- {\rm Tr}(\r A)$, one can prove \cite{GibiliscoImparatoIsola:2007} that
\begin{equation}\label{crucial}
\frac{f(0)}{2} \cdot \langle i[\rho,A], i[\rho,B]\rangle_{\rho,f}={\rm Cov}^{f_{\text{SLD}}}_{\rho} (A,B)-{\rm Cov}_{\rho}^{\tilde f} (A,B),
\end{equation}
for regular $f$.

\subsection{Uncertainty principle}

The standard uncertainty principle due to Heisenberg, Schr\"odinger and Robertson (\cite{Heisenberg:1927, Robertson:1929, Schroedinger:1930,
Robertson:1934}) may be formulated as the inequality

\begin{equation} \label{nup}
    {\rm det}\left\{ {\rm Cov}_{\rho}(A_j,A_k) \right\} \geq
    \begin{cases}
	0, & N=2m+1,\\[0.5ex]
	{\rm det} \{-\frac{i}{2}\cdot {\rm Tr}(\rho [A_j,A_k])\}, & N=2m.
    \end{cases}
\end{equation}

This means that for an odd number of observables the above inequality does not say anything more then the classical fact that the correlation matrix of a random vector is positive semidefinite. With the help of formula (\ref{crucial}) one can prove a different inequality
that says that

\begin{equation} \label{dyn}
    {\rm det} \{ {\rm Cov}_{\rho}(A_j,A_k) \} \geq {\rm det} \left\{
    \frac{f(0)}{2} \cdot  \langle i[\rho,A_j], i[\rho,
    A_k]\rangle_{\rho,f}\right\},
\end{equation}
(see \cite{AudenaertCaiHansen:2008, GibiliscoIsola:2008c} and references therein).

\subsection{The inversion formula}

\begin{definition}
For $g \in {\cal F}_{op}^{\, n}$  set 
\begin{equation} \label{gbar}
{\check g}(x)=
	\begin{cases}
		g''(1)\cdot  \frac{(x-1)^2}{2g(x)-(x+1)}, & x\in (0,1)\cup(1,\infty),\\
		1, & x=1.
	\end{cases}
\end{equation}
We also write ${\cal H}(g)={\check g}$.
\end{definition}

\begin{proposition} \label{inversion}

If $g$ is non-regular then ${\check g}$ is regular, namely ${\check g} \in {\cal F}_{op}^{\, r}$.
Moreover if $f \in {\cal F}^{\, r}_{op}$ and $g \in {\cal F}^{\, n}_{op}$ then
$$
{\cal H}({\cal G}(f))=f \qquad \text{and}\qquad  {\cal G}({\cal H}(g))=g.
$$

\end{proposition}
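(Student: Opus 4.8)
The plan is to reduce the whole proposition to a single algebraic identity, namely that $\mathcal H\circ\mathcal G$ is the identity on ${\cal F}_{op}^{\,r}$, and then to feed this into the bijection $\mathcal G\colon{\cal F}_{op}^{\,r}\to{\cal F}_{op}^{\,n}$ already established. A preliminary remark makes the statement well posed: every function in ${\cal F}_{op}$ is real-analytic on $(0,\infty)$ (L\"owner), so $g''(1)$ in the definition of $\check g$ is meaningful, and for $f\in{\cal F}_{op}^{\,r}$ the function $\tilde f$, built from the positive analytic function $f$, is analytic near $1$ as well.

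First I would record two elementary facts about $\tilde f=\mathcal G(f)$. Directly from the definition,
$$2\tilde f(x)-(x+1)=-(x-1)^2\,\frac{f(0)}{f(x)},$$
which is nonzero for $x\neq 1$ since $f(0),f(x)>0$; in particular the denominator occurring in $\mathcal H(\tilde f)$ never vanishes on $(0,1)\cup(1,\infty)$, so $\check{\tilde f}$ is well defined. Writing $h(x):=2\tilde f(x)-(x+1)=(x-1)^2\psi(x)$ with $\psi(x):=-f(0)/f(x)$ analytic near $1$ and $\psi(1)=-f(0)$, one has $h''(1)=2\psi(1)$; since also $h''=2(\tilde f)''$, this gives $(\tilde f)''(1)=\psi(1)=-f(0)$ (and $(\tilde f)'(1)=\tfrac12$, consistent with the symmetry of $\tilde f$).

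Substituting the two facts into the definition of $\mathcal H$ yields, for $x\neq 1$,
$$\check{\tilde f}(x)=(\tilde f)''(1)\cdot\frac{(x-1)^2}{2\tilde f(x)-(x+1)}=-f(0)\cdot\frac{(x-1)^2}{-(x-1)^2 f(0)/f(x)}=f(x),$$
while $\check{\tilde f}(1)=1=f(1)$; hence $\mathcal H(\mathcal G(f))=f$ for every $f\in{\cal F}_{op}^{\,r}$. The rest is formal. Given $g\in{\cal F}_{op}^{\,n}$, the bijection theorem supplies a unique $f\in{\cal F}_{op}^{\,r}$ with $\mathcal G(f)=\tilde f=g$; then $\check g=\check{\tilde f}=f$, which at once shows that $\check g$ is well defined and regular, i.e. $\mathcal H(g)\in{\cal F}_{op}^{\,r}$, and that $\mathcal G(\mathcal H(g))=\mathcal G(f)=\tilde f=g$. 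Equivalently, $\mathcal H\circ\mathcal G=\mathrm{id}_{{\cal F}_{op}^{\,r}}$ exhibits $\mathcal H$ as a left inverse of the bijection $\mathcal G$, forcing $\mathcal H=\mathcal G^{-1}$ on ${\cal F}_{op}^{\,n}$, and both claimed identities follow.

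The only genuinely computational point is the evaluation $(\tilde f)''(1)=-f(0)$, which I expect to be the main (and fairly mild) obstacle; everything else is bookkeeping or a direct appeal to the bijection already proved. As an independent sanity check one can verify directly that $\check g(1)=1$ and $x\,\check g(1/x)=\check g(x)$ using the symmetry relation $g(1/x)=g(x)/x$, but this is subsumed by the argument above.
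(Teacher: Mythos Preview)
Your proof is correct and follows essentially the same approach as the paper: both arguments hinge on the algebraic identity $2\tilde f(x)-(x+1)=-(x-1)^2 f(0)/f(x)$ together with the evaluation $(\tilde f)''(1)=-f(0)$, after which $\check{\tilde f}=f$ is immediate and the bijection theorem handles the rest. The only cosmetic difference is that you compute $(\tilde f)''(1)$ by factoring $(x-1)^2$ out of $h(x)=2\tilde f(x)-(x+1)$, whereas the paper obtains the equivalent relation $-f(0)=g''(1)$ by applying L'H\^opital's rule twice to $\lim_{x\to 1}f(x)=1$.
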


\begin{proof}

Let $g$ be non-regular and $f$ regular such $\tilde f=g$. This means that

$$
g(x)=\frac{1}{2}\left[ (x+1)-(x-1)^2 \frac{f(0)}{f(x)}
\right].
$$
If $x \not=1$ this implies
$$
f(x) =
    -f(0)\cdot  \frac{(x-1)^2}{2g(x)-(x+1)}
$$

Note that the property $x g(x^{-1})= g(x)$ implies that for any $g\in {\cal F}_{op}$ one has $g'(1)=\frac{1}{2}$.

Therefore applying two times the De L'Hopital theorem one has
$$
1=\lim_{x \to 1} f(x)= -f(0)\lim_{x \to 1}\frac{(x-1)^2}{2g(x)-(x+1)}=-f(0) \cdot \frac{1}{g''(1)}.
$$
That is
$$
-f(0)=g''(1).
$$
This ends the proof.
\end{proof}

\subsection{WYD information and a class of operator monotone functions}

The correspondence between the WYD-information
$$
I_{\rho}^\beta(A)=-\frac{1}{2} {\rm Tr}([\rho^{\beta},A][\rho^{1-\beta}, A]),\qquad 0<\beta<1,
$$
and quantum Fisher informations depends, as noted in the introduction, on the operator monotonicity of the functions
$$
f_{\beta}(x)= f_{\text{WYD}(\beta)}(x)= \beta (1-
\beta) \frac{(x-1)^2}{(x^{\beta}-1) (x^{1-\beta}-1)}\qquad 0<\beta<1,
$$
see \cite{HasegawaPetz:1997,Hansen:2006,Szabo:2007} for the existing proofs. We note here that Proposition \ref{inversion} gives a new proof of the above result.

\begin{proposition} The function $ f_{\beta}\in {\cal F}^{\, r}_{op} $ for $ \beta \in (0,1). $
\end{proposition}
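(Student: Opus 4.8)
The plan is to exhibit $ f_\beta $ as the image under $ \mathcal H $ of an explicit non-regular function in $ \mathcal F_{op} $, and then invoke Proposition \ref{inversion} to conclude regularity and operator monotonicity of $ f_\beta $. Concretely, I would first guess the candidate non-regular function $ g_\beta $ whose inversion formula $ \check g_\beta $ should reproduce $ f_\beta $. Comparing the shape of $ f_\beta(x) = \beta(1-\beta)\frac{(x-1)^2}{(x^\beta-1)(x^{1-\beta}-1)} $ with $ \check g(x) = g''(1)\frac{(x-1)^2}{2g(x)-(x+1)} $, one is led to require $ 2g_\beta(x)-(x+1) = -\frac{1}{g_\beta''(1)}(x^\beta-1)(x^{1-\beta}-1) $, i.e.
\[
g_\beta(x) = \frac{x+1}{2} - \frac{1}{2g_\beta''(1)}(x^\beta-1)(x^{1-\beta}-1)
           = \frac{x+1}{2} + \frac{1}{2g_\beta''(1)}\bigl(x + 1 - x^\beta - x^{1-\beta}\bigr).
\]
A short computation of $ g_\beta''(1) $ from this expression pins down the normalization: differentiating twice at $ x=1 $ gives $ g_\beta''(1) = -\beta(1-\beta)\cdot(\text{something}) $, and solving the resulting consistency equation yields the clean closed form $ g_\beta(x) = \frac{1}{2}\bigl(x^\beta + x^{1-\beta}\bigr) $, the \emph{symmetrized Wigner-Yanase-Dyson} function.

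Next I would verify that this $ g_\beta $ is indeed a non-regular element of $ \mathcal F_{op} $. The three conditions are all elementary: $ g_\beta(1) = \frac{1}{2}(1+1) = 1 $; the functional equation $ x g_\beta(x^{-1}) = \frac12 x (x^{-\beta}+x^{-(1-\beta)}) = \frac12(x^{1-\beta}+x^\beta) = g_\beta(x) $ holds by symmetry in $ \beta \leftrightarrow 1-\beta $; operator monotonicity of $ g_\beta $ follows because $ x\mapsto x^\beta $ and $ x\mapsto x^{1-\beta} $ are operator monotone on $ (0,\infty) $ for exponents in $ (0,1) $ (Löwner), and $ \mathcal F_{op} $ is closed under convex combinations — indeed $ g_\beta = \frac12 x^\beta + \frac12 x^{1-\beta} $ is literally such a combination of operator monotone functions. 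Finally $ g_\beta(0) = \lim_{x\to0}\frac12(x^\beta + x^{1-\beta}) = 0 $ since both exponents are positive, so $ g_\beta \in \mathcal F_{op}^{\, n} $.

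With $ g_\beta \in \mathcal F_{op}^{\, n} $ established, Proposition \ref{inversion} immediately gives that $ \check g_\beta = \mathcal H(g_\beta) \in \mathcal F_{op}^{\, r} $. It then remains only to identify $ \check g_\beta $ with $ f_\beta $: plugging $ g_\beta(x)=\frac12(x^\beta+x^{1-\beta}) $ into \eqref{gbar} gives, for $ x\neq 1 $,
\[
\check g_\beta(x) = g_\beta''(1)\cdot\frac{(x-1)^2}{x^\beta + x^{1-\beta} - x - 1}
                 = g_\beta''(1)\cdot\frac{(x-1)^2}{-(x^\beta-1)(x^{1-\beta}-1)},
\]
and computing $ g_\beta''(1) = \frac12\bigl(\beta(\beta-1) + (1-\beta)(-\beta)\bigr) = -\beta(1-\beta) $ yields exactly $ \check g_\beta(x) = \beta(1-\beta)\frac{(x-1)^2}{(x^\beta-1)(x^{1-\beta}-1)} = f_\beta(x) $; the value at $ x=1 $ matches the normalization $ \check g_\beta(1)=1 $, and a limit check confirms $ f_\beta $ is continuous there. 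Hence $ f_\beta = \check g_\beta \in \mathcal F_{op}^{\, r} $, which is the claim.

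The only genuine obstacle I anticipate is bookkeeping in the second-derivative computations: one must be careful that $ g_\beta''(1) $ computed directly from $ \frac12(x^\beta+x^{1-\beta}) $ agrees with the normalization constant forced by the inversion formula, and that the sign conventions in $ 2g(x)-(x+1) $ versus $ x^\beta+x^{1-\beta}-x-1 $ are tracked correctly (note $ (x^\beta-1)(x^{1-\beta}-1) = x - x^\beta - x^{1-\beta} + 1 = -(x^\beta+x^{1-\beta}-x-1) $, so the two minus signs cancel and $ f_\beta > 0 $ as required). Everything else is a direct application of results already in the excerpt.
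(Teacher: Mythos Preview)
Your proof is correct and follows exactly the same route as the paper: identify $g_\beta(x)=\tfrac12(x^\beta+x^{1-\beta})$ as a non-regular element of $\mathcal{F}_{op}$, compute $\check g_\beta=f_\beta$, and invoke Proposition~\ref{inversion}. Your write-up supplies the verifications (the functional equation, $g_\beta(0)=0$, the factorization $(x^\beta-1)(x^{1-\beta}-1)=-(2g_\beta(x)-(x+1))$, and $g_\beta''(1)=-\beta(1-\beta)$) that the paper leaves implicit; one small wording quibble is that $x^\beta$ itself is not in $\mathcal F_{op}$, so the relevant closure is of the cone of operator monotone functions under positive combinations rather than of $\mathcal F_{op}$ per se, but your argument is mathematically sound as written.
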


\begin{proof}
Note that the function
$$
g_{\beta}(x)=\frac{x^{\beta}+x^{1-\beta}}{2} \qquad 0<\beta<1
$$
is operator monotone and non-regular. Since $ f_{\beta}=\check{g}_{\beta} $ we get the desired conclusion.
\end{proof}

\end{document}